 \newcommand{\bR}{\mathbb{R}}
 \newcommand{\bP}{\mathbb{P}} 
 \newcommand{\cP}{\mathcal{P}}
 \newcommand{\bZ}{\mathbb{Z}}
 \newcommand{\bC}{\mathbb{C}}
 \newcommand{\pd}{\partial}
\newcommand{\be}{\begin{equation}}
\newcommand{\ee}{\end{equation}}
\newcommand{\bea}{\begin{eqnarray}}
\newcommand{\eea}{\end{eqnarray}}
\newcommand{\ben}{\begin{eqnarray*}}
\newcommand{\een}{\end{eqnarray*}}
\newcommand{\half}{\frac{1}{2}}
\newtheorem{cor}{Corollary}[section]
\newtheorem{lem}[cor]{Lemma}
 \newtheorem{thm}[cor]{Theorem}
\theoremstyle{remark}
\definecolor{A}{rgb}{.75,1,.75}
\definecolor{green}{rgb}{0,1,0}
\definecolor{yellow}{rgb}{1,1,0}
\definecolor{orange}{rgb}{1,.7,0}
\definecolor{red}{rgb}{1,0,0}
\definecolor{white}{rgb}{1,1,1}
\begin{document}
\title
{On Regularized Elliptic Genera of ALE Spaces}
%\author{ }
%\thanks{ }

\author{Jian Zhou}
\address{Department of Mathematical Sciences\\Tsinghua University\\Beijng, 100084, China}
\email{jzhou@math.tsinghua.edu.cn}

\begin{abstract}
We define regularized elliptic genera of ALE space of type A
by taking some regularized  nonequivariant limits of
their equivariant elliptic genera with respect to some torus actions.
They turn out to be multiples of the elliptic genus of a K3 surface.
\end{abstract}

\maketitle

\section{Introduction}

In an early paper \cite{Zho}
we have studied equivariant elliptic genera of toric Calabi-Yau 3-folds.
In this paper we will study the case of toric Calabi-Yau 2-folds. 
They are ALE spaces of type A.
The study of elliptic genera of noncompact Calabi-Yau manifolds 
was initiated about ten about years by Eguchi and Sugawara \cite{Egu-Sug, Egu-Sug2}.
The case of ALE spaces was further studied by Eguchi-Sugawara-Taormina \cite{Egu-Sug-Tao}.
Predictions for elliptic genera of toric Calabi-Yau 3-folds and ALE spaces 
were made in these papers.
Equivariant elliptic genera of noncompact spaces 
were studied around the same time in the physics literature \cite{Hol-Iqb-Vaf} 
and the mathematical literature \cite{Li-Liu-Zho, Wae1, Wae2}.
Prompted by some recent work on equivariant elliptic genera by physicists \cite{Hoh-Iqb, Har-Lee-Mur},
we attempt to verify the predictions of Eguchi and Sugawara
based on equivariant elliptic genera of toric Calabi-Yau manifolds in  and this paper.
It turns out to be successful in the case of toric Calabi-Yau 3-folds \cite{Zho},
but not so successful in the case of toric Calabi-Yau surfaces studied in this paper.
Nevertheless we believe this study has some independent interest
which might be useful for other problems.

A toric Calabi-Yau n-fold is a noncompact complex manifold 
constructed by adding some lower dimensional complex tori to the 
$n$-dimensional complex torus $(\bC^*)^n$ according to some combinatorial data.
It admits natural $n$-torus action with isolated fixed points.
One can define its equivariant elliptic genus by adding up the contributions from 
the fixed points.
The equivariant elliptic genus then depends on $n$ extra variables $t_1, \dots, t_n$,
which are linear coordinates on the Lie algebra of the $n$-torus.
Our idea is to get rid of these extra variables in some suitable way.
First of all,
we restrict to an $n-1$-dimensional subtorus which fixes a holomorphic volume form.
This not only reduces the number of extra variables by one,
but also simplifies the expressions to work with.  

In the case of toric Calabi-Yau 3-fold,
we have used a simple averaging method \cite{Zho}.
Unfortunately,
this method no long works for very simple reasons
in the case of toric Calabi-Yau surfaces or four-olds.
We will treat in this paper the case of toric Calabi-Yau surfaces by defining 
a regularized nonequivariant limit.
More precisely,
we will analyze the singularities  of the equivariant elliptic genera 
of ALE spaces of type A when the extra variable is approaching zero, 
we subtract the singular terms then take the limit. 
(In doing so we make use of some results for the equivariant elliptic genus of $\bC^2$. 
We present some alternative proofs of them to gain more understandings
from different perspectives.)
The results turn out to be proportional to the elliptic genus 
of a K3 surface. 
So they are not the same as the elliptic genera of ALE spaces 
corresponding to the decompactified elliptic genus of K3 surfaces 
studied in \cite{Egu-Sug-Tao}.
Nevertheless they should also be interesting because they are weak Jacobi forms
as in the compact case.

We arrange the rest of the paper  as follows.
In \S 2 we recall the 2-torus actions on ALE spaces of type A.
In \S 3 we present the modular transformation properties 
of their equivariant elliptic genera. 
In \S 4 we introduce their regularized elliptic genera
after analyzing the singularities of their equivariant elliptic genera.

\vspace{.1in}

{\em Acknoledgements}.
This research is partially supported by NSFC grant 11171174.
The author thanks Professor Eguchi for explaining  his joint work
with Professor Sugawara about ten years ago.

\section{A Torus Action on ALE Spaces of Type A}

In this section we  recall the definition of some  torus actions on ALE spaces of type $A$.
See \cite[Example 4.1]{Wan-Zho} and the references therein.

\subsection{The ALE Space of type $A_{n-1}$}

Such a space can be obtained by gluing $n$ copies of $\bC^2$ as follows.
For $i=0, 1, \dots, n-1$, denote by $U_i$ the $i$-th copy of $\bC^2$
and let $(x_i, y_i)$ be the linear coordinates on it.
Then $U_i$ and $U_{i+1}$ are glued together
by the following formula for change of coordinates:
\begin{align} \label{eqn:Coordinate change}
x_{i+1} & = x_i^2 y_i, & y_{i+1} & = x_i^{-1},
\end{align}
identifying $U_i - \{x_i=0\} $ with $U_{i+1} - \{y_{i+1} = 0\}$.
It is easy to see that
\be
dx_{i+1} \wedge d y_{i+1} = - dx_i \wedge dy_i,
\ee
so one has a holomorphic volume form $\Omega$ on $X_n = U_0 \cup \cdots \cup U_{n-1}$
such that on each $U_i$,
\be \label{eqn:Volume}
\Omega = (-1)^{i-1} dx_i \wedge d y_i.
\ee

For $i=1, \dots, n$,
let $C_i$ be $\{(x_{i-1}, 0) \in U_{i-1} \;|\; x_{i-1} \neq 0\}$
glued with $\{(0, y_i)\in U_i \;|\; y_i \neq 0\}$.
Then $C_i$ is a copy of $\bP^1$.
By  \eqref{eqn:Coordinate change}
one sees that the normal bundle of $C_i$ in $X_n$ is $-2$,
and so the self-intersection number of $C_i$ is $-2$.
One can also see that
\be
C_i\cdot C_j = \begin{cases}
-2, & \text{if $i = j$}, \\
1, & \text{if $i=j+1$ or $j=i+1$}, \\
0, & \text{otherwise}.
\end{cases}
\ee

For $i=1, \dots, n$,
let $V_i = U_i - \{x_i = 0\} - \{y_i = 0\} \cong (\bC^*)^2$.
Then the gluing identifies all $V_i$'s with each other,
and $X_n$ is obtained by adding copies of $(\bC^*)^1$ and $(\bC^*)^0$,
i.e., it is an example of toric varieties.

Next we recall the natural map $\pi: X_n \to \bC^2/\bZ_n$,
where $\bZ_n$ acts on $\bC^2$ as follows:
\be
\xi_n \cdot (x, y) = (\xi_n x, \xi_n^{-1} y),
\ee
where $\xi_n = e^{2\pi i/n}$.
In $U_i$, $\pi$ is given by:
\begin{align}
x & = \frac{x_iy_i}{(x_i^iy_i^{i+1})^{1/n} } = x_i^{1-i/nr} y_i^{1- (i+1)/n}, &
y & = (x_i^iy_i^{i+1})^{1/n}.
\end{align}
One can easily check that this is well-defined by \eqref{eqn:Coordinate change}.
Furthermore,
$\pi$ maps all the curves $C_i$ to $[(0,0)] \in \bC^2/\bZ_n$.
On $V_i$ we have an inverse map of $\pi$:
\begin{align} \label{eqn:xiyi}
x_i & = \frac{x^{i+1}}{y^{n-i-1}}, & y_i & = \frac{y^{n-i}}{x^i}.
\end{align}

\subsection{Torus action on ALE spaces}

Consider the following two-torus action on $\bC^2$:
\be
(s_1, s_2) \cdot (x, y) = (s_1x, s_2y).
\ee
It induces  the following $T^2$-action on $X_n$ by \eqref{eqn:xiyi}:
\be \label{eqn:Action}
(s_1, s_2) \cdot (x_i, y_i)
= (s_1^{-i-1}s_2^{n-i-1}x_i, s_1^i s_2^{-n+i}y_i)
\ee
on $U_i$.
One can directly check that these actions on $U_i$ glued together to an action on $X_n$
by \eqref{eqn:Coordinate change}.

\subsection{Weight decomposition at the fixed points}

It is clear that the $T^2$-action has $n$ isolated fixed points
$p_i$, $i=0, \dots, n-1$,
where $p_i \in U_i$ is given by $x_i =y_i = 0$.
Since $(x_i, y_i)$ are linear coordinates on $U_i$,
the weight decomposition at $p_i$ is given by \eqref{eqn:Action}.

\section{Equivariant Elliptic Genera of ALE Spaces}

In this section we study the modular transformation properties 
of equivariant genera of ALE spaces of type A by Jacobi theta function.
For reference on equivariant elliptic genera of noncompact complex manifold,
the reader can consult \cite{Li-Liu-Zho}.

\subsection{Equivariant elliptic genera of ALE spaces}

Using information on the fixed points and weight decompositions 
there,
the equivariant elliptic genus of $X_r$ is given by
the Lefschetz fixed point contributions \cite{Ati-Bot}:
\be \label{eqn:Xr-EG}
\begin{split}
& Z_{X_r}(\tau, z; t_1, t_2) \\
= & \sum_{j=0}^{r-1} \frac{\theta_1(\tau, z-(j+1)t_1+(r-j-1)t_2)}
{\theta_1(\tau, -(j+1)t_1 + (r-j-1)t_2)} \\
& \cdot  \frac{\theta_1(\tau, z+jt_1 + (j-r)t_2)}{\theta_1(\tau, jt_1 + (j-r)t_2)},
\end{split}
\ee
where $\theta_1$ is the theta function defined by:
\be
\begin{split}
\theta_1(\tau, z)
= & i q^{1/8} y^{-1/2} \prod_{m=1}^\infty (1 - q^m)(1 - y q^{m-1})(1 - y^{-1}q^m).
\end{split}
\ee
For example,
\be \label{eqn:Z-conifold}
\begin{split}
Z_{X_2}(z, \tau; t_1, t_2)
= & \frac{\theta_1(z-t_1+t_2, \tau)}{\theta_1(-t_1+t_2, \tau)}
\frac{\theta_1(z-2t_2, \tau)}{\theta_1(-2t_2, \tau)}  \\
+ & \frac{\theta_1(z-2t_1, \tau)}{\theta_1(-2t_1, \tau)} \cdot
\frac{\theta_1(z+t_1-t_2, \tau)}{\theta_1(t_1-t_2, \tau)} .
\end{split}
\ee
Such expressions were obtained in \cite{Har-Lee-Mur} in the setting of
gauged linear sigma model by the method developed in \cite{Ben-Eag-Hor-Tac}.

\subsection{Modular transformation properties of equivariant elliptic genera of
ALE spaces}

Recall that the theta-function $\theta_1$ has the following well-known
modular transformation properties \cite{Cha}:
\begin{align}
\theta_1(\tau, z+1) & = - \theta_1(\tau, z), &
\theta_1(\tau, z+\tau) & = - e^{-2\pi i z - \pi i \tau} \theta_1(\tau, z), \\
\theta_1(\tau+1, z) & = e^{\pi i/4} \theta_1(\tau, z), &
\theta_1(- \frac{1}{\tau}, \frac{z}{\tau})
& = - i \sqrt{\frac{\tau}{i}} e^{\frac{\pi i z^2}{\tau}} \theta_1(\tau, z).
\end{align}
From these it is straightforward to deduce the following transformation formulas:
\bea
&& Z_{X_n} (\tau, z+1; t_1, t_2) = (-1)^2 \cdot Z_{X_n}(\tau, z; t_1, t_2), \\
&& Z_{X_n}(\tau, z+\tau; t_1, t_2) = e^{-\pi i (t_1+t_2)} \cdot (-e^{-2\pi iz - \pi i\tau})^2 \cdot Z_{X_n}(\tau, z; t_1, t_2), \\
&& Z_{X_n}(\tau, z; t_1+1, t_2) =  Z_{X_n}(\tau, z, \tau; t_1, t_2), \\
&& Z_{X_n}(\tau, z; t_1+\tau, t_2) = e^{2\pi i z}\cdot Z_{X_n}(\tau, z; t_1, t_2), \\
&& Z_{X_n}(\tau, z; t_1, t_2+1) =  Z_{X_n}(\tau, z; t_1, t_2), \\
&& Z_{X_n}(\tau, z; t_1, t_2+\tau) =  e^{2\pi i z}\cdot Z_{X_n}(\tau, z; t_1, t_2), \\
&& Z_{X_n}(\tau+1, z; t_1, t_2) = Z_{X_n}(\tau, z; t_1, t_2), \\
&& Z_{X_n}(-\frac{1}{\tau}, \frac{z}{\tau}; \frac{t_1}{\tau}, \frac{t_2}{\tau})
= e^{2 \cdot \frac{\pi i}{\tau}(z^2-z(t_1+t_2))} Z_{X_n}(\tau, z; t_1, t_2).
\eea

\section{Regularized Elliptic Genera of $X_r$}

We define regularized nonequivariant limit of equivariant elliptic genera of 
$X_r$ by analyzing the singularity of $Z_{X_r}$.
We are based on some results of Hoheneger-Iqbal \cite{Hoh-Iqb}
for which we present some new proofs.

\subsection{Singularity of equivariant elliptic genera of $X_r$}

Using the fact that 
\be
\theta_1(\tau, -z) = -\theta_1(\tau, z),
\ee
one can rewrite \eqref{eqn:Xr-EG} in the following form:
\be
Z_{X_r}(\tau,z; t_1, t_2)
= \frac{\cdots}{\prod_{j=0}^r \theta_1(\tau, j t_1+ (j-r) t_2)}. 
\ee
We first consider $Z_{X_r}$ as a meromorphic function in $t_1$. 
Since $\theta_1(\tau, z)$ has a first order zero at $z= 2\pi i ( m \tau + n)$ for $m, n \in \bZ$,
so by \eqref{eqn:Xr-EG},
$Z_{X_r}$ has possible first order poles when
\be
jt_1 +(j-r) t_2 = m \tau + n,
\ee
i.e., 
\be
t_1 = \frac{1}{j} (m \tau +n  + (r- j) t_2).
\ee
for some $j=1, \dots, r$, $m, n \in \bZ$.
However, by calculating the residues at these points,
one finds that for many of them, 
the singularities of different terms in \eqref{eqn:Xr-EG} cancel with each other,
and only $t_1 = \frac{1}{r}(m\tau +n)$ ($m,n \in \bZ$) are the first order poles 
of $Z_{X_r}$.
The same holds for $t_2$ by symmetry between $t_1$ and $t_2$.
So one can gets an expansion of the form near $t_1=t_2 = 0$:
\be \label{eqn:Expansion}
Z_{X_r}(\tau, z; t_1, t_2)
= \frac{1}{t_1t_2} \sum_{m,n \geq 0} a_{m,n}(\tau, z) t_1^m t_2^n,
\ee
where $a_{m,n}(\tau, z)$ satisfy:
\be
a_{m,n}(\tau, z) = a_{n, m}(\tau, z).
\ee

\subsection{Equivariant elliptic genera restricted to a circle}

The singularity of $Z_{X_r}(\tau,z; t_1, t_2)$ prevents us from
taking the nonequivariant limit $\lim_{t_1=0} \lim_{t_2 \to 0}$.
To fix this problem, 
we take $t_1 = - t_2 = t$ to get:
\be
Z_{X_r}(\tau, z; t, -t)
=  r \cdot \frac{\theta_1(\tau, z+rt)}{\theta_1(\tau, rnt)}
 \cdot  \frac{\theta_1(\tau, z-rt)}{\theta_1(\tau,- rn t)},
\ee
or equivalently,
\be
\begin{split}
& Z_{X_r}(\tau, z; t, -t) \\
= & r y^{-1}  \prod_{m=1}^\infty
\frac{(1-ye^{2\pi ir t}q^{m-1})(1-y^{-1}e^{-2\pi i rt}q^{m})}
{(1- e^{2\pi irt} q^{m-1})(1- e^{-2\pi i rt}q^m)} \\
 & \cdot \prod_{m=1}^\infty \frac{
(1-ye^{-2\pi i r t}q^{m-1})(1-y^{-1} e^{2\pi i rt} q^m)}
{(1- e^{-2\pi i r t} q^{m-1})(1- e^{2\pi i rt}q^m)}.
\end{split}
\ee
This corresponds to restriction to the subtorus whose action preserves 
the holomorphic volume form \eqref{eqn:Volume}.
The expansion \eqref{eqn:Expansion} now becomes an expansion of the following form:
\be \label{eqn:Expansion2}
Z_{X_r}(\tau, z; t, -t)
= \sum_{g\geq 0} t^{2g-2} \alpha_{2g}(\tau, z).
\ee
The leading term is easily seen to be given by:
\be
\alpha_0(\tau, z) = - \frac{1}{(2\pi)^2r} 
\biggl( \frac{\theta_1(\tau, z)}{\eta(\tau)} \biggr)^2
\ee

\begin{lem} \label{lm:Ellipticity}
The coefficients $\alpha_{2g}(\tau,z)$ are weak Jacobi forms 
of index $1$ and weight $2g-2$.
Furthermore,
$b_{2g}(\tau, z) = \frac{\alpha_{2g}(\tau, z)}{\alpha_0(\tau, z)}$ 
are elliptic functions such that
\bea
&& b_{2g}(\tau+1, z) = b_{2g}(\tau, z), \\
&& b_{2g}(-\frac{1}{\tau}, \frac{z}{\tau}) = \tau^{2g} \cdot b_{2g}(\tau, z).
\eea
\end{lem}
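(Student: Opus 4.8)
The plan is to read off every asserted property from the eight transformation formulas for $Z_{X_n}(\tau,z;t_1,t_2)$ stated above, specialized to the line $t_1=t$, $t_2=-t$ that was used to define \eqref{eqn:Expansion2}. The crucial simplification is that $t_1+t_2=0$ on this line, so the prefactor $e^{-\pi i(t_1+t_2)}$ in the $z\mapsto z+\tau$ law disappears, and the term $z(t_1+t_2)$ in the exponent of the $S$-transformation drops out, leaving $e^{2\pi i z^2/\tau}$. Writing $f(\tau,z,t):=Z_{X_r}(\tau,z;t,-t)=\sum_{g\geq0}t^{2g-2}\alpha_{2g}(\tau,z)$, I would then substitute and compare the coefficient of $t^{2g-2}$ on the two sides of each specialized identity.

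The transformations in $z$ and the translation $\tau\mapsto\tau+1$ leave the equivariant parameter $t$ untouched, so comparing coefficients is immediate and yields
\[
\alpha_{2g}(\tau,z+1)=\alpha_{2g}(\tau,z),\qquad
\alpha_{2g}(\tau,z+\tau)=e^{-4\pi i z-2\pi i\tau}\alpha_{2g}(\tau,z),
\]
together with $\alpha_{2g}(\tau+1,z)=\alpha_{2g}(\tau,z)$. The first two are exactly the quasi-periodicities of a Jacobi form of index $1$.

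The decisive step is the $S$-transformation, because there the arguments are $t_1/\tau=t/\tau$ and $t_2/\tau=-t/\tau$: the parameter $t$ is itself rescaled to $t/\tau$. Hence on the left-hand side the coefficient of $t^{2g-2}$ carries an extra factor $\tau^{-(2g-2)}$. Since $f$ transforms with trivial $\tau$-weight (it is a weight-zero ratio of theta functions once $t$ is scaled along), moving this factor across produces
\[
\alpha_{2g}\Bigl(-\tfrac{1}{\tau},\tfrac{z}{\tau}\Bigr)=\tau^{2g-2}\,e^{2\pi i z^2/\tau}\,\alpha_{2g}(\tau,z),
\]
the $S$-law of a form of weight $2g-2$ and index $1$; the entire modular weight is thus \emph{generated} by the rescaling of $t$. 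Combined with the holomorphicity of $Z_{X_r}(\tau,z;t,-t)$ in $(\tau,z)$ and the fact that its $q$-expansion, visible from the explicit infinite product, contains only nonnegative powers of $q$, this identifies each $\alpha_{2g}$ as a weak Jacobi form of weight $2g-2$ and index $1$.

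Finally, for $b_{2g}=\alpha_{2g}/\alpha_0$ I would invoke the $g=0$ case, so that $\alpha_0$ is itself a weak Jacobi form of weight $-2$ and index $1$. Forming the quotient, the index-$1$ quasi-periodicity factors in $z$ cancel, giving $b_{2g}(\tau,z+1)=b_{2g}(\tau,z+\tau)=b_{2g}(\tau,z)$, so $b_{2g}$ is elliptic in $z$; the $\tau\mapsto\tau+1$ invariances cancel to give $b_{2g}(\tau+1,z)=b_{2g}(\tau,z)$; and in the $S$-law the weights subtract as $\tau^{2g-2}/\tau^{-2}=\tau^{2g}$, yielding $b_{2g}(-\tfrac{1}{\tau},\tfrac{z}{\tau})=\tau^{2g}b_{2g}(\tau,z)$. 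The main obstacle is precisely the bookkeeping in the $S$-transformation: one must track the rescaling $t\mapsto t/\tau$ and the extraction of the coefficient of $t^{2g-2}$ simultaneously, since this is the sole source of the nontrivial modular weight and a careless treatment would misassign it. A secondary point to check carefully is the \emph{weak} (rather than holomorphic) growth condition, which I would confirm from the product formula.
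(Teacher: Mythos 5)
Your proposal is correct and is essentially the paper's own argument: the paper's proof simply cites the modular transformation properties of $Z_{X_r}(\tau,z;t_1,t_2)$ together with the expansion \eqref{eqn:Expansion2}, and your write-up is a careful elaboration of exactly that, with the key bookkeeping (the vanishing of $t_1+t_2$ on the line $t_1=-t_2=t$, and the rescaling $t\mapsto t/\tau$ in the $S$-transformation generating the weight $2g-2$) done correctly.
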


\begin{proof}
Easy consequence of the modular transformation properties 
of $Z_{X_r}(\tau, z; t_1, t_2)$ and the expansion \eqref{eqn:Expansion2}.

\end{proof}

As noted  in \cite{Hoh-Iqb} in the case of $\bC^2$,
$Z_{X_r}(\tau, z; t, -t)$ can be expressed in terms of Eisenstein series or Weierstrass $\cP$-function:

\begin{thm}
The equivariant elliptic genus of $X_r$ can be expressed as follows:
\begin{multline} \label{eqn:Z-in-Eisenstein}
 Z_{X_r}(\tau, z; t, -t)
= \frac{-r}{(2\pi i rt)^2} \cdot y^{-1} \biggl(\prod_{n=1}^\infty \frac{(1- y q^{n-1}) (1 - y^{-1} q^n)}{(1-q^n)^2} \biggr)^2 \\
 \cdot \exp \biggl(-2 \sum_{n=1}^\infty \frac{ (rt)^{2n}}{(2n)!} \biggl( \frac{(2n-1)!}{z^{2n}} \\
-   \sum_{l=1}^\infty \frac{z^{2l}}{(2l)!} \frac{(2\pi i)^{2n+2l} B_{2n+2l}}{(2n+2l)} E_{2n+2l}(\tau) \biggr)   \biggr),
\end{multline}
where $E_{2k}(\tau)$ are the normalized Eisenstein series defined by:
\be
E_{2k}(\tau) = 1 - \frac{4k}{B_{2k}} \sum_{n=1}^\infty \sigma_{2k-1}(n) q^n,
\ee
Alternatively,
\be \label{eqn:Z-in-Weierstrass}
\begin{split}
Z_{X_r}(\tau, z; t, -t) =
& \frac{-r}{(2\pi  rt)^2}  \cdot \biggl( \frac{\theta_1(\tau, z)}{\eta(q)^3}\biggr)^2 \\
& \cdot \exp \biggl(-2 \sum_{n=1}^\infty \frac{(rt)^{2n}}{(2n)!} \frac{\pd^{2n-2}}{\pd z^{2n-2}}\cP(\tau,z)\biggr) \\
& \cdot \exp \biggl( -2 \sum_{n=2}^\infty \frac{(2\pi ir t)^{2n}}{(2n)!} \frac{B_{2n}}{2n} E_{2n}(\tau) \biggr),
\end{split}
\ee
where $\cP(\tau, z)$ is the Weierstrass $\cP$-function given by:
\be
\cP(\tau, z) = \frac{1}{z^2}
- \sum_{n=1}^\infty (2n+1) \frac{(2\pi i)^{2n+2}B_{2n+2}}{(2n+2)!} E_{2n+2}(\tau) z^{2n}.
\ee
\end{thm}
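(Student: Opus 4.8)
The plan is to reduce the whole computation to a single Taylor expansion in the variable $rt$. First I would use $\theta_1(\tau,-w)=-\theta_1(\tau,w)$ to write the restricted genus as
\be
Z_{X_r}(\tau,z;t,-t) = -r\,\frac{\theta_1(\tau,z+rt)\,\theta_1(\tau,z-rt)}{\theta_1(\tau,rt)^2},
\ee
then take the logarithm and expand the numerator (about $z$) and the denominator (about $0$) separately in powers of $rt$, recombining at the end.

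For the numerator, the symmetry $w\mapsto -w$ kills all odd orders, giving $\log\theta_1(\tau,z+w)+\log\theta_1(\tau,z-w)=2\log\theta_1(\tau,z)+2\sum_{n\geq1}\frac{w^{2n}}{(2n)!}\frac{\pd^{2n}}{\pd z^{2n}}\log\theta_1(\tau,z)$. Here I would invoke the classical identity $\frac{\pd^{2}}{\pd z^{2}}\log\theta_1(\tau,z)=-\cP(\tau,z)-\eta_1(\tau)$, coming from the relation between $\theta_1$ and the Weierstrass $\sigma$-function, in which $\eta_1(\tau)$ is a $z$-independent constant proportional to $E_2$. Hence $\frac{\pd^{2n}}{\pd z^{2n}}\log\theta_1=-\frac{\pd^{2n-2}}{\pd z^{2n-2}}\cP$ for $n\geq2$, and $\eta_1$ enters only through the $n=1$ term.

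For the denominator I would expand $\log\theta_1(\tau,w)$ near $w=0$ by integrating $\frac{\pd^{2}}{\pd w^{2}}\log\theta_1=-\cP(\tau,w)-\eta_1$ twice, fixing the constants of integration from the leading behaviour $\theta_1(\tau,w)\sim 2\pi\eta(\tau)^3 w$ (i.e.\ $\theta_1'(\tau,0)=2\pi\eta^3$), which I would check directly from the product formula. This yields a quadratic term $-\frac{\eta_1}{2}(rt)^2$ together with higher even powers controlled by the Taylor coefficients of $\cP$. Subtracting, the two $\eta_1(rt)^2$ contributions cancel; this cancellation is the crucial point, since it removes all $E_2$-dependence and lets the answer be written purely in terms of $\cP$ and of $E_{2n}$ with $n\geq2$. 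Exponentiating, the $\log(rt)$ and $\log(2\pi\eta^3)$ terms assemble into the prefactor $\frac{-r}{(2\pi rt)^2}(\theta_1/\eta^3)^2$, the surviving numerator sum gives the first exponential factor, and the reindexed denominator sum gives the $E_{2n}$ exponential, which is exactly \eqref{eqn:Z-in-Weierstrass}.

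To pass to the Eisenstein form \eqref{eqn:Z-in-Eisenstein} I would substitute the explicit expansion of $\cP$ and differentiate term by term: the polar part gives $\frac{\pd^{2n-2}}{\pd z^{2n-2}}z^{-2}=(2n-1)!\,z^{-2n}$, and reindexing the regular part produces the sum $-\sum_{l\geq1}\frac{z^{2l}}{(2l)!}\frac{(2\pi i)^{2n+2l}B_{2n+2l}}{2n+2l}E_{2n+2l}$ together with, for each $n\geq2$, one $z$-independent constant $-\frac{(2\pi i)^{2n}B_{2n}}{2n}E_{2n}$. Restoring these constants reassembles, together with the separate weight-$2n$ Eisenstein sum in \eqref{eqn:Z-in-Weierstrass}, the manifestly $z$-dependent inner expression of \eqref{eqn:Z-in-Eisenstein}, so the two exponential factors merge into one; the prefactors agree since $(2\pi i rt)^2=-(2\pi rt)^2$ and $(\theta_1/\eta^3)^2=-y^{-1}\bigl(\prod_{n\geq1}(1-yq^{n-1})(1-y^{-1}q^n)/(1-q^n)^2\bigr)^2$. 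The main obstacle throughout is the bookkeeping of this $E_2/\eta_1$ constant: one must verify that the same $z$-independent term in $\pd^2\log\theta_1$ appears in both the $n=1$ numerator term and the quadratic denominator term and cancels, and that the normalization of $\cP$ (which omits $E_2$) makes the index ranges match so that the $n\geq2$ constants reproduce precisely the weight-$2n$ Eisenstein correction.
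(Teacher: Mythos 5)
Your proposal is correct, but it takes a genuinely different route from the paper's. The paper deliberately avoids the Weierstrass-theoretic identities you invoke: its stated aim is to show these formulas ``can be derived by elementary series manipulations,'' so it works directly with the infinite product for $\theta_1$, expanding $\log(1-e^v q^n)$ via Lambert series (producing the divisor sums $\sigma_{2k-1}(n)$ and hence $E_{2k}$) and via the Bernoulli generating function $\log(1-e^v)=\log v+\pi i+\half v+\sum_{k\geq 1}\frac{B_{2k}}{(2k)!(2k)}v^{2k}$; this yields the Eisenstein form \eqref{eqn:Z-in-Eisenstein} first, and then \eqref{eqn:Z-in-Weierstrass} follows by recognizing that the inner bracket equals $\frac{\pd^{2n-2}}{\pd z^{2n-2}}\cP(\tau,z)-\delta_{n,1}\frac{B_2}{2}E_2(\tau)$. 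You go in the opposite order: starting from the quotient $-r\,\theta_1(\tau,z+rt)\theta_1(\tau,z-rt)/\theta_1(\tau,rt)^2$, you use the classical identity $\pd_z^2\log\theta_1=-\cP-\eta_1$ coming from the $\sigma$--$\theta_1$ relation (which the paper itself only uses later, in its second proof of \eqref{eqn:HI}), obtain \eqref{eqn:Z-in-Weierstrass} first with the $\eta_1$ (i.e.\ $E_2$) cancellation made explicit, and then expand $\cP$ to recover \eqref{eqn:Z-in-Eisenstein}. Your bookkeeping is right on all the delicate points: the $\eta_1$ contributions from the $n=1$ numerator term and the quadratic denominator term do cancel; the $z$-independent constant terms of $\pd_z^{2n-2}\cP$ for $n\geq 2$ are exactly what converts between the single exponential of \eqref{eqn:Z-in-Eisenstein} and the two exponentials of \eqref{eqn:Z-in-Weierstrass}; and the reconciliation of the two prefactors via $(2\pi i rt)^2=-(2\pi rt)^2$ together with $(\theta_1/\eta^3)^2=-y^{-1}\bigl(\prod_{n\geq 1}(1-yq^{n-1})(1-y^{-1}q^n)/(1-q^n)^2\bigr)^2$ is genuinely needed and correct. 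As for what each approach buys: yours is shorter and conceptually transparent (the quasimodular $E_2$ anomaly visibly drops out), but it leans on classical elliptic-function identities with their normalization pitfalls and is essentially the Hohenegger--Iqbal-style argument the paper set out to complement; the paper's route is longer but self-contained, using nothing beyond the product formula and formal power series.
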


\begin{proof}
In \cite{Hoh-Iqb} a formula relating the Weierstrass $\cP$-function
and the Jacobi theta function was used. 
Here we show that these formulas can be derived by elementary series manipulations.
We begin with:
\ben
\log (1-e^v q^n)
& = &  -\sum_{m=1}^\infty \frac{q^{mn} e^{mv}}{m}
= - \sum_{m=1}^\infty \frac{q^{mn}}{m}
- \sum_{m=1}^\infty q^{mn} \sum_{k=1}^\infty m^{k-1} \frac{v^k}{k!}  \\
& = & \log (1 - q^n) - \sum_{k=1}^\infty \frac{v^k}{k!} \sum_{m=1}^\infty m^{k-1} q^{mn}.
\een
Note in passing that
\be
\sum_{m=1}^\infty m^n x^n = \frac{\sum_{j=0}^{n-1} A(n, j) x^{j+1}}{(1-x)^{n+1}},
\ee
where $A(n, j)$ are Eulerian numbers and $\sum_{j=0}^{n-1} A(n, j) x^j$ are
the Eulerian polynomials.

\ben
&& \sum_{n=1}^\infty \biggl( \log(1- e^{v}q^n) + \log (1-  e^{-v} q^n) \biggr)  \\
& = & 2 \sum_{n=1}^\infty \log (1-q^n) - 2 \sum_{k=1}^\infty \frac{v^{2k}}{(2k)!} \sum_{m, n=1}^\infty m^{2k-1} q^{mn} \\
& = & 2 \sum_{n=1}^\infty \log (1-q^n) - 2 \sum_{k=1}^\infty \frac{v^{2k}}{(2k)!}
\sum_{n=1}^\infty \sigma_{2k-1}(n) q^{n}.
\een

Now note
\ben
\frac{d}{dv} \log \frac{1-e^v}{v}
& = & -\frac{e^v}{1-e^v} - \frac{1}{v}
= 1 + \frac{1}{v} \biggl( \frac{t}{e^v-1} - 1 \biggr) \\
& = & 1 + \sum_{k=1}^\infty \frac{B_k}{k!} v^{k-1}
= \half + \sum_{k=1}^\infty \frac{B_{2k}}{(2k)!} v^{2k-1}.
\een
And so one can get:
\be \label{eqn:Log}
\log (1-e^v)
= \log v + \pi i + \half  v + \sum_{k=1}^\infty \frac{B_{2k}}{(2k)!(2k)} v^{2k}.
\ee
It follows that
\ben
&& \log  (1- e^{v}) + \log (1-  e^{-v})
=2 \log v+  2 \sum_{k=1}^\infty \frac{B_{2k}}{(2k)!(2k)} v^{2k}.
\een

Combining the above computations:
\ben
&& \log (1 - e^v) + \log (1- e^{-v}) \\
& + & 2 \sum_{n=1}^\infty \biggl( \log(1- e^{v}q^{n}) + \log (1-  e^{-v} q^n) \biggr)  \\
& = & 2\log v + 2 \sum_{k=1}^\infty \frac{B_{2k}}{(2k)!(2k)} v^{2k} \\
& + & 4 \sum_{n=1}^\infty \log (1-q^n) - 4 \sum_{k=1}^\infty \frac{v^{2k}}{(2k)!}
\sum_{n=1}^\infty \sigma_{2k-1}(n) q^{n} \\
& = & 2 \log v + 4 \sum_{n=1}^\infty \log (1- q^n)
+  2 \sum_{n=1}^\infty \frac{B_{2n}}{(2n)!(2n)} v^{2n} E_{2n}(\tau).
\een

By \eqref{eqn:Log} we get:
\ben
\log (1-e^{u\pm v})
= \log (u \pm v)  + \pi i + \half  (u \pm v)
+ \sum_{k=1}^\infty \frac{B_{2k}}{(2k)!(2k)} (u \pm v)^{2k},
\een
and so
\ben
&& \log (1-e^{u + v}) + \log (1-e^{u- v}) \\
& = & \log (u^2 - v^2)  + 2\pi i + u
+ \sum_{k=1}^\infty \frac{B_{2k}}{(2k)!(2k)} ((u + v)^{2k}+(u-v)^{2k}).
\een
Therefore,
\ben
&& \sum_{n=1}^\infty \biggl( \log(1- e^{u+v}q^{n-1}) + \log (1-  e^{-u-v} q^n) \\
& + &  \log(1- e^{u-v}q^{n-1}) + \log (1-  e^{-u+v} q^n) \biggr)  \\
& = &  \log (u^2-v^2) + 2\pi i + u + 4 \log (1-q^n) \\
& + & \sum_{k=1}^\infty \frac{B_{2k}}{(2k)!(2k)} ((u+v)^{2k} +(u-v)^{2k}) E_{2k}(\tau) \\
& = &  2 \log u+ 2\pi i + u + 4 \log (1-q^n)
+ 2 \sum_{k=1}^\infty \frac{B_{2k}}{(2k)!(2k)} u^{2k} E_{2k}(\tau) \\
& - &  \sum_{n=1}^\infty \frac{1}{n} \frac{v^{2n}}{u^{2n}}
+ 2 \sum_{k=1}^\infty \frac{B_{2k}}{2k}  \sum_{j=1}^k \frac{v^{2j}}{(2j)!} \frac{u^{2k-2j}}{(2k-2j)!} E_{2k}(\tau) \\
& = & 2 \log \prod_{n=1}^\infty (1- e^u q^{n-1}) (1 - e^{-u} q^n) \\
& - & 2 \sum_{n=1}^\infty \frac{v^{2n}}{(2n)!} \biggl( \frac{(2n-1)!}{u^{2n}}
-   \sum_{l=0}^\infty \frac{u^{2l}}{(2l)!} \frac{B_{2n+2l}}{(2n+2l)} E_{2n+2l}(\tau) \biggr).
\een
The $l=0$ terms in the second line of the last equality are
\ben
&& 2 \sum_{n=1}^\infty \frac{v^{2n}}{(2n)!} \frac{B_{2n}}{2n} E_{2n}(\tau).
\een
After cancelling all these term and plugging in $u=2\pi i z$ and $v= 2\pi i rt$,
one can prove \eqref{eqn:Z-in-Eisenstein}.
To prove \eqref{eqn:Z-in-Weierstrass},
note
\ben
&&  \frac{(2n-1)!}{u^{2n}}
-   \sum_{l=0}^\infty \frac{u^{2l}}{(2l)!} \frac{B_{2n+2l}}{(2n+2l)} E_{2n+2l}(\tau) \\
& = & \frac{\pd^{2n-2}}{\pd z^{2n-2}} \cP(\tau, z) - \delta_{n,1} \frac{B_{2}}{2} E_{2}(\tau).
\een
\end{proof}

Let
\be
G_{2k}(\tau) = - \frac{(2\pi i)^{2k}B_{2k}}{(2k)!} E_{2k}.
\ee
Then \eqref{eqn:Z-in-Weierstrass} can be rewritten as follows:
\be \label{eqn:Z-in-Weierstrass2}
\begin{split}
Z_{X_r}(\tau, z; t, -t) =
& \frac{r}{(2\pi  rt)^2}  \cdot \biggl( \frac{\theta_1(\tau, z)}{\eta(q)^3}\biggr)^2 \\
& \cdot \exp \biggl(-2 \sum_{n=1}^\infty \frac{(rt)^{2n}}{(2n)!} \frac{\pd^{2n-2}}{\pd z^{2n-2}}\cP(\tau,z)\biggr) \\
& \cdot \exp \biggl( \sum_{n=2}^\infty \frac{(rt)^{2n}}{n} G_{2n}(\tau) \biggr).
\end{split}
\ee
To find the coefficients $\alpha_{2g}(\tau,z)$ in \eqref{eqn:Expansion2},
we use the following result  due to \cite{Hoh-Iqb}:
\be \label{eqn:HI}
\begin{split}
& \exp \biggl(-2 \sum_{n=1}^\infty \frac{t^{2n}}{(2n)!} \frac{\pd^{2n-2}}{\pd z^{2n-2}}\cP(\tau,z)\biggr) 
  \cdot \exp \biggl( \sum_{n=2}^\infty \frac{t^{2n}}{n} G_{2n}(\tau) \biggr) \\
= & 1 - \cP(\tau, z) t^2 
+ \sum_{n\geq 2} (2n-1) G_{2n}(\tau) t^{2n}.
\end{split}
\ee
We will present two new proofs here. 
Let us first rewrite this formula. Since
\be
\cP(\tau, z) = \frac{1}{z^2} + \sum_{n \geq 2} (2n-1)G_{2n}(\tau) z^{2n-2},
\ee
the above formula can be rewritten as 
\be
\begin{split}
& \cP(\tau, t) - \cP(\tau, z) \\
= & \frac{1}{t^2}  \exp \biggl(-2 \sum_{n=1}^\infty \frac{t^{2n}}{(2n)!} \frac{\pd^{2n-2}}{\pd z^{2n-2}}\cP(\tau,z)
 + \sum_{n=2}^\infty \frac{t^{2n}}{n} G_{2n}(\tau) \biggr).
\end{split}
\ee
The right-hand side of this formula can be written as:
\ben 
%%% &&  \exp \biggl(-2 \sum_{n=1}^\infty \frac{t^{2n}}{(2n)!} \biggl( \frac{(2n-1)!}{z^{2n}} \\
%%% && -   \sum_{l=1}^\infty \frac{z^{2l}}{(2l)!} \frac{(2\pi i)^{2n+2l} B_{2n+2l}}{(2n+2l)} E_{2n+2l}(\tau) \biggr)   \biggr) \\
&& \frac{1}{t^2} \exp \biggl(-2 \sum_{n=1}^\infty \frac{t^{2n}}{(2n)!} \biggl( \frac{(2n-1)!}{z^{2n}} 
+  \sum_{l=1}^\infty \frac{(2n+2l-1)!}{(2l)!}  G_{2n+2l}(\tau) z^{2l} \biggr)   \biggr) \\
& = & (\frac{1}{t^2} - \frac{1}{z^2} ) \cdot \exp \biggl(-2 \sum_{n=1}^\infty \frac{t^{2n}}{(2n)!}   
\sum_{l=1}^\infty \frac{(2n+2l-1)!}{(2l)!}  G_{2n+2l}(\tau) z^{2l} \biggr),
\een
by explicitly expanding the second term on the right-hand side,
one can see that it can be written in the following form:
\be
\frac{1}{t^2} - \cP(\tau, z) + \sum_{n \geq 2} \beta_{2n}(\tau, z) t^{2n-2}, 
\ee
where $\beta_{2n}(\tau, z)$ is holomorphic in both $\tau$ and $z$.

\begin{lem}
For each $n \geq 2$,
$\beta_{2n}(\tau, z)$ is independent of $z$ and is a modular form of weight $2n$ for $SL_2(\bZ)$,
and so it can be written simply as $\beta_{2n}(\tau)$.
\end{lem}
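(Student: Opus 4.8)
The plan is to derive both assertions from the single representation of the right-hand side of \eqref{eqn:HI} in its rewritten form,
\be
R(\tau,z,t) = \frac{1}{t^2}\exp\Bigl(-2\sum_{n\geq 1}\frac{t^{2n}}{(2n)!}\frac{\pd^{2n-2}}{\pd z^{2n-2}}\cP(\tau,z) + \sum_{n\geq 2}\frac{t^{2n}}{n}G_{2n}(\tau)\Bigr),
\ee
whose Laurent expansion in $t$ is $\frac{1}{t^2} - \cP(\tau,z) + \sum_{n\geq 2}\beta_{2n}(\tau,z)t^{2n-2}$. The point is that the exponent is assembled entirely from $\cP(\tau,z)$, together with its $z$-derivatives, and from the Eisenstein series $G_{2m}$ with $m\geq 2$; both ingredients have transparent elliptic and modular behaviour, and I would transport that behaviour to the $\beta_{2n}$ by comparing powers of $t$.

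\emph{Independence of $z$.} First I would show $\beta_{2n}$ does not depend on $z$. Since $\cP(\tau,z)$ is the Weierstrass function of the lattice $\bZ + \tau\bZ$, it is doubly periodic, $\cP(\tau,z+1)=\cP(\tau,z+\tau)=\cP(\tau,z)$, and hence so are all of its $z$-derivatives, while the $G_{2n}(\tau)$ carry no $z$. Thus the exponent, and therefore $R$ itself, is invariant under $z\mapsto z+1$ and $z\mapsto z+\tau$; expanding in $t$ and using that the $t^0$-coefficient $-\cP(\tau,z)$ is already elliptic, I conclude each $\beta_{2n}(\tau,\cdot)$ is elliptic with periods $1$ and $\tau$. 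On the other hand $\beta_{2n}$ is holomorphic in $z$: in the factored form $R=(\frac{1}{t^2}-\frac{1}{z^2})\exp(E)$, with $E$ holomorphic and vanishing to order $z^2$, the only pole in $z$ sits in the $t^0$-term and is exactly the $-1/z^2$ of $-\cP$, so all higher $t$-coefficients are regular at $z=0$, and by periodicity they are then entire. An entire doubly periodic function is constant, so $\beta_{2n}(\tau,z)=\beta_{2n}(\tau)$.

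\emph{Modularity.} Next I would apply the two generators of $SL_2(\bZ)$. Under $\tau\mapsto\tau+1$ the lattice $\bZ+\tau\bZ$ is unchanged, so $\cP$ and every $G_{2m}$ are invariant, giving $\beta_{2n}(\tau+1)=\beta_{2n}(\tau)$. Under $\tau\mapsto-1/\tau$ I substitute simultaneously $z\mapsto z/\tau$ and $t\mapsto t/\tau$ and use $\cP(-1/\tau,z/\tau)=\tau^2\cP(\tau,z)$, which differentiates to $[\pd_w^{2n-2}\cP(-1/\tau,w)]_{w=z/\tau}=\tau^{2n}\pd_z^{2n-2}\cP(\tau,z)$, together with $G_{2n}(-1/\tau)=\tau^{2n}G_{2n}(\tau)$. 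In each summand the factor $(t/\tau)^{2n}$ supplies $\tau^{-2n}$, which exactly cancels the weight $\tau^{2n}$, so the exponent is unchanged and only the prefactor $1/(t/\tau)^2=\tau^2/t^2$ survives, yielding $R(-1/\tau,z/\tau,t/\tau)=\tau^2 R(\tau,z,t)$. Matching the coefficient of $t^{2n-2}$ gives $\tau^{-(2n-2)}\beta_{2n}(-1/\tau)=\tau^2\beta_{2n}(\tau)$, i.e. $\beta_{2n}(-1/\tau)=\tau^{2n}\beta_{2n}(\tau)$. Finally, $\beta_{2n}$ is a polynomial in the $G_{2m}$ with $2\leq m\leq n$, each holomorphic at the cusp, so $\beta_{2n}$ is holomorphic on the upper half-plane and at $i\infty$; with the two transformation laws this makes it a modular form of weight $2n$ for $SL_2(\bZ)$.

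\emph{Main obstacle.} The delicate point is the weight bookkeeping in the $\tau\mapsto-1/\tau$ step: one must check that the $\tau$-powers produced by $z\mapsto z/\tau$, $t\mapsto t/\tau$ and by the weights of $\cP$ and the $G_{2m}$ conspire to cancel term by term, leaving only the overall $\tau^2$. It is essential here that the exponent involves $G_{2m}$ only for $m\geq 2$, so that the honest transformation $G_{2m}(-1/\tau)=\tau^{2m}G_{2m}(\tau)$ applies; the quasi-modular $G_2$, whose anomalous term would break the argument, never enters, because $\cP$ has no constant term in its $z$-expansion and the explicit $G$-sum starts at $n=2$. Everything else — the ellipticity, the regularity at $z=0$, and the $\tau\mapsto\tau+1$ invariance — follows immediately from the corresponding properties of $\cP$ and the Eisenstein series.
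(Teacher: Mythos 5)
Your proof is correct, and it establishes the same two intermediate facts on which the paper's proof rests --- that $\beta_{2n}(\tau,z)$ is elliptic and holomorphic in $z$ (hence constant, by Liouville), and that it transforms with weight $2n$ --- but it obtains them from a genuinely different source. The paper's proof is a one-line appeal to the argument of Lemma \ref{lm:Ellipticity}: there the transformation laws are transported from the modular transformation properties of the equivariant elliptic genus $Z_{X_n}(\tau,z;t_1,t_2)$ established in \S 3 (i.e.\ ultimately from theta-function identities), via the identity \eqref{eqn:Z-in-Weierstrass2} after stripping off the prefactor $\bigl(\theta_1(\tau,z)/\eta^3\bigr)^2/(2\pi rt)^2$. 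You instead read ellipticity and modularity directly off the defining exponential expression, using only the intrinsic transformation laws of $\cP(\tau,z)$, its $z$-derivatives, and the Eisenstein series $G_{2m}$ with $m\geq 2$; your proof never mentions $Z_{X_r}$ or $\theta_1$ at all. What your route buys: it is self-contained, it re-proves the holomorphy in $z$ that the paper merely asserts just before the lemma (your observation that the $z$-pole sits entirely in the $t^0$ coefficient $-\cP$, plus periodicity, is exactly the needed argument), it checks holomorphy at the cusp $i\infty$ (via polynomiality in Eisenstein series), which is part of the definition of a modular form but is silently skipped in the paper, and it isolates the key structural point that the quasi-modular $G_2$ never appears in the exponent, which is precisely why the weight bookkeeping under $\tau \mapsto -1/\tau$ closes up. What the paper's route buys is brevity and uniformity: the modular properties of $Z_{X_n}$ are already on record, so the same two-line argument serves both Lemma \ref{lm:Ellipticity} and this lemma.
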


\begin{proof}
Similar to Lemma \ref{lm:Ellipticity},
one can show that $\beta_{2n}(\tau, z)$ is an elliptic function in $z$:
$$ \beta_{2n}(\tau, z+1) = \beta_{2n}(\tau, z+\tau) = \beta_{2n}(\tau, z),
$$
 and 
$$\beta_{2n}(-\frac{1}{\tau}, \frac{z}{\tau}) = \tau^{2n} \beta_{2n}(\tau, z).$$
But $\beta_{2n}$ is holomorphic in $z$,
so $\beta_{2n}$ is independent of $z$,
and therefore
it is a modular form of weight $2n$.
\end{proof}

Now we have:
\be
\begin{split}
& \frac{1}{t^2} - \cP(\tau, z) + \sum_{n \geq 2} \beta_{2n}(\tau) t^{2n-2} \\
= & (\frac{1}{t^2} - \frac{1}{z^2} ) \cdot \exp \biggl(-2 \sum_{n=1}^\infty \frac{t^{2n}}{(2n)!}
\sum_{l=1}^\infty \frac{(2n+2l-1)!}{(2l)!}  G_{2n+2l}(\tau) z^{2l} \biggr).
\end{split}
\ee
Now we take $\frac{\pd}{\pd t}$ on both sides:
\ben
&& - \frac{2}{t^3}  + \sum_{n \geq 2} (2n-2) \beta_{2n}(\tau) t^{2n-3} \\
& = & - \frac{2}{t^3} \cdot \exp \biggl(-2 \sum_{n=1}^\infty \frac{t^{2n}}{(2n)!}
\sum_{l=1}^\infty \frac{(2n+2l-1)!}{(2l)!}  G_{2n+2l}(\tau) z^{2l} \biggr) \\
& + & (\frac{1}{t^2} - \frac{1}{z^2} ) \cdot \exp \biggl(-2 \sum_{n=1}^\infty \frac{t^{2n}}{(2n)!}
\sum_{l=1}^\infty \frac{(2n+2l-1)!}{(2l)!}  G_{2n+2l}(\tau) z^{2l} \biggr) \\
&& \cdot \biggl(-2 \sum_{n=1}^\infty \frac{t^{2n-1}}{(2n-1)!}
\sum_{l=1}^\infty \frac{(2n+2l-1)!}{(2l)!}  G_{2n+2l}(\tau) z^{2l} \biggr) \\
& = & (\frac{1}{t^2} - \cP(\tau,z) + \sum_{n \geq 2} \beta_{2n}(\tau) t^{2n-2} )\\
&& \cdot \biggl( -\frac{2}{t} \sum_{n=0}^\infty \frac{t^{2n}}{z^{2n}}
 -2 \sum_{n=1}^\infty \frac{t^{2n-1}}{(2n-1)!}
\sum_{l=1}^\infty \frac{(2n+2l-1)!}{(2l)!}  G_{2n+2l}(\tau) z^{2l} \biggr).
\een
Since the left-hand side is independent of the variable $z$,
so after taking the terms on  the right-hand side of the second equality
independent of $z$,
we get
\ben
&& - \frac{2}{t^3}  + \sum_{n \geq 2} (2n-2) \beta_{2n}(\tau) t^{2n-3} \\
& = & -\frac{2}{t^3} + 2 \sum_{n\geq 2} (2n-1) G_{2n}(\tau) t^{2n-3} 
+ 2 \sum_{n \geq 1} \frac{(2n+1)!}{(2n-1)!2!}G_{2n+2} t^{2n-1} \\
& - & 2 \sum_{n \geq 2} \beta_{2n}(\tau) t^{2n-3} \\
& = & -\frac{2}{t^3} + \sum_{n\geq 2} (2n -2)(2n-1) G_{2n}(\tau) t^{2n-3}.  
\een
So we get:
\be
\beta_{2n}(\tau) =  (2n-1)  G_{2n}(\tau).
\ee
This completes our first proof.

For the second proof, rewrite \eqref{eqn:HI} in the following form:
\be
\begin{split}
\cP(\tau, t) - \cP(\tau, z) 
=  (2\pi)^2 \biggl( \frac{\theta_1(\tau, z)}{\eta(q)^3} \biggr)^{-2} 
\frac{\theta_1(\tau, z+t)}{\theta_1(\tau, t)} \cdot \frac{\theta_1(\tau, z-t)}{\theta_1(\tau, -t)}.
\end{split}
\ee
This can be proved using the following formula in terms of 
Weierstrass $\sigma$-function \cite[p. 55]{Cha}:
\be
\cP(\tau, u) - \cP(\tau, v)
= - \frac{\sigma(\tau, u+v)\sigma(\tau, u-v)}{\sigma^2(\tau, u)\sigma^2(\tau, v)},
\ee
and the following formula of $\sigma$  in terms of theta function \cite[p. 60]{Cha}:
\be
\sigma(\tau, u) = \frac{\theta_1(\tau, z)}{\pd_z \theta_1(\tau, 0)} e^{\eta_1 z^2}.
\ee

As a consequence of \eqref{eqn:HI},
one gets:

\begin{thm}
The expansion of $Z_{X_r}(\tau,z; t, -t)$ is given by:
\be
\begin{split}
Z_{X_r}(\tau, z; t, -t)
& = \frac{-1}{(2\pi t)^2r}
\biggl( \frac{\theta_1(\tau, z)}{\eta(\tau)} \biggr)^2 \\
& \cdot \biggl( 1 - \cP(\tau, z) (rt)^2
+ \sum_{n\geq 2} (2n-1) G_{2n}(\tau) (rt)^{2n}\biggr).
\end{split}
\ee
\end{thm}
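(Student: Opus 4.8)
The plan is to read off the claimed expansion directly from the Weierstrass presentation \eqref{eqn:Z-in-Weierstrass2}, substituting the closed-form identity \eqref{eqn:HI}. The statement is essentially a rewriting: all of the analytic content has already been packaged into \eqref{eqn:Z-in-Weierstrass2} and \eqref{eqn:HI}, so no new estimates, residue computations, or modular transformation arguments are required. First I would recall that \eqref{eqn:Z-in-Weierstrass2} expresses $Z_{X_r}(\tau,z;t,-t)$ as the elementary prefactor $\frac{r}{(2\pi rt)^2}\bigl(\theta_1(\tau,z)/\eta(\tau)^3\bigr)^2$ times the two exponential factors $\exp\bigl(-2\sum_{n\geq1}\frac{(rt)^{2n}}{(2n)!}\frac{\pd^{2n-2}}{\pd z^{2n-2}}\cP(\tau,z)\bigr)$ and $\exp\bigl(\sum_{n\geq2}\frac{(rt)^{2n}}{n}G_{2n}(\tau)\bigr)$. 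The crucial observation is that these are exactly the two factors on the left-hand side of \eqref{eqn:HI}, once the formal variable $t$ appearing in \eqref{eqn:HI} is replaced by $rt$.

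Next I would perform that substitution $t\mapsto rt$ in \eqref{eqn:HI}, which collapses the product of the two exponentials into the single polynomial-in-$(rt)^2$ expression $1 - \cP(\tau,z)(rt)^2 + \sum_{n\geq2}(2n-1)G_{2n}(\tau)(rt)^{2n}$. Inserting this into \eqref{eqn:Z-in-Weierstrass2} and simplifying the numerical prefactor via $\frac{r}{(2\pi rt)^2}=\frac{1}{(2\pi t)^2r}$ yields the stated formula. As a final consistency check I would compare against the expansion \eqref{eqn:Expansion2}: the bracket begins with the constant $1$, so the leading $t^{-2}$ coefficient reproduces the previously recorded $\alpha_0(\tau,z)$, while the coefficient of each $t^{2g-2}$ reads off the remaining $\alpha_{2g}(\tau,z)$; in particular the $t^0$ term recovers the Weierstrass $\cP$-function contribution and the higher terms the Eisenstein contributions $(2n-1)G_{2n}$.

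The only genuine care required is bookkeeping in the scalar prefactor, namely matching the overall sign and the power of $\eta$ so that the $t^{-2}$ term agrees with $\alpha_0$. Tracing the infinite-product form of $Z_{X_r}(\tau,z;t,-t)$ as $t\to0$ (the $m=1$ denominator factors $1-e^{\pm2\pi irt}$ producing the double zero $(2\pi rt)^2$, against the numerator limit) shows the leading factor is $-\frac{1}{(2\pi t)^2r}\bigl(\theta_1(\tau,z)/\eta(\tau)^3\bigr)^2$, so one should read the $\eta$ in the prefactor as $\eta^3$ and retain the minus sign as in \eqref{eqn:Z-in-Weierstrass}. Beyond this prefactor bookkeeping there is no substantive obstacle: the real work lies entirely in the already-established identity \eqref{eqn:HI} (proved in two ways above) and in the Weierstrass presentation \eqref{eqn:Z-in-Weierstrass2}, from which the theorem follows by direct substitution.
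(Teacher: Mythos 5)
Your proposal is correct and is essentially the paper's own argument: the theorem is obtained exactly by substituting the identity \eqref{eqn:HI} (with $t$ replaced by $rt$) into the Weierstrass presentation \eqref{eqn:Z-in-Weierstrass2} and simplifying the scalar prefactor. Your extra care with that prefactor is warranted and correctly resolved, since the paper is internally inconsistent there --- \eqref{eqn:Z-in-Weierstrass} and \eqref{eqn:Z-in-Weierstrass2} disagree in sign, and the theorem writes $\eta(\tau)$ where the $t\to 0$ limit of the product formula forces $\eta(\tau)^3$ --- so your reading (minus sign, $\eta^3$) is the right one.
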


\subsection{Regularized elliptic genera of $X_r$}

We define the regularized elliptic genus of $X_r$ to be
\be
Z^{reg}_{X_r}(\tau, z) : = \lim_{t\to 0} 
\biggl( Z_{X_r}(\tau, z; t, -t) -\frac{-1}{(2\pi t)^2r}
\biggl( \frac{\theta_1(\tau, z)}{\eta(\tau)} \biggr)^2 \biggr).
\ee
I.e.,
\be 
Z^{reg}_{X_r}(\tau, z)
= \frac{r}{(2\pi )^2}
\biggl( \frac{\theta_1(\tau, z)}{\eta(\tau)} \biggr)^2 \cP(\tau, z) .
\ee
It is a weak Jacobi form of weight $0$ and index $2$,
so it is a multiple of $Z_{K3}(\tau, z)$,
the elliptic genus of a K3 surface \cite{Egu-Oog-Tao-Yan, Bor-Lib}.

To conclude, we remark that
it should be interesting to consider 
\be \label{eqn:Expansion3}
Z_{X_r}(\tau, z; t\sqrt{\beta}, -\frac{t}{\sqrt{\beta}})
= -\frac{1}{t^2} \sum_{m,n \geq 0} a_{m,n}(\tau, z) (-1)^n 
t^{m+n}\sqrt{\beta}^{m-n},
\ee
and resum  the right-hand side as follows:
\be
\sum_{m+n =k} a_{m,n}(\tau, z) (-1)^n\sqrt{\beta}^{m-n}
= \sum_{j=0}^k c_j(\tau, z) (\sqrt{\beta} - \frac{1}{\sqrt{\beta}})^j.
\ee
It is also interesting to consider the equivariant elliptic genera of 
toric Calabi-Yau manifolds of dimensions $>3$.
We hope to adrress such problems in the future.


\begin{thebibliography}{999}


\bibitem{Ati-Bot}
M. F. Atiyal,  R. Bott,
{\em A Lefschetz fixed point formula for elliptic complexes. II. Applications}.
Ann. of Math. (2)  88  1968 451--491


\bibitem{Ben-Eag-Hor-Tac}
F. Benini, R. Eager, K. Hori, Y. Tachikawa, 
{\em Elliptic genera of 2d N=2  gauge theories}. 
Comm. Math. Phys.  333  (2015),  no. 3, 1241--1286.


\bibitem{Bor-Lib}
L.A. Borisov, A. Libgober,
{\em Elliptic genera of toric varieties and applications to mirror symmetry}.
Invent. Math.  140  (2000),  no. 2, 453--485.

\bibitem{Cha}
K. Chandrasekharan,
Elliptic functions.
Grundlehren der Mathematischen Wissenschaften, 281. Springer-Verlag, Berlin, 1985.

\bibitem{Egu-Oog-Tao-Yan}
T. Eguchi, H. Ooguri, A. Taormina, S.-K. Yang,
{\em Superconformal algebras and string compactification
on manifolds with $SU(n)$  holonomy}.
Nuclear Phys. B  315  (1989),  no. 1, 193--221.



\bibitem{Egu-Sug}
T. Eguchi, Y. Sugawara,
{\em $SL(2;\bR)/U(1)$  supercoset and elliptic genera of non-compact Calabi-Yau manifolds}.
J. High Energy Phys.  2004,  no. 5, 014, 38 pp. (electronic).

\bibitem{Egu-Sug2}
T. Eguchi, Y. Sugawara, 
{\em Conifold type singularities, N=2  Liouville and $SL(2;\bR)/U(1)$  theories}. 
J. High Energy Phys.  2005,  no. 1, 027, 50 pp.

\bibitem{Egu-Sug-Tao}
T. Eguchi, Y. Sugawara, A. Taormina, 
{\em  Modular forms and elliptic genera for ALE spaces}.  
Exploring new structures and natural constructions in mathematical physics,  125--159, 
Adv. Stud. Pure Math., 61, Math. Soc. Japan, Tokyo, 2011.


\bibitem{Har-Lee-Mur}
J.A.Harvey, S. Lee, S. Murthy, 
{\em Elliptic genera of ALE and ALF manifolds from gauged linear sigma models}. 
J. High Energy Phys.  2015,  no. 2, 110, front matter+50 pp.

\bibitem{Hoh-Iqb}
S. Hohenegger, A. Iqbal, 
{\em M-strings, elliptic genera and N=4  string amplitudes}. 
Fortschr. Phys.  62  (2014),  no. 3, 155--206.
 
\bibitem{Hol-Iqb-Vaf}
T. Hollowood, A. Iqbal, C. Vafa, Cumrun,
{\em Matrix models, geometric engineering and elliptic genera}. 
J. High Energy Phys.  2008,  no. 3, 069, 81 pp. 

\bibitem{Li-Liu-Zho}
J. Li, K. Liu, J. Zhou,
{\em Topological string partition functions as equivariant indices}.
Asian J. Math.  10  (2006),  no. 1, 81--114.

\bibitem{Wae1}
R. Waelder, 
{\em Equivariant elliptic genera}. 
Pacific J. Math.  235  (2008),  no. 2, 345--377. 

\bibitem{Wae2}
R. Waelder, 
{\em Equivariant elliptic genera and local McKay correspondences}. 
Asian J. Math.  12  (2008),  no. 2, 251--284. 

\bibitem{Wan-Zho}
Z. Wang, J. Zhou, 
{\em Tautological sheaves on Hilbert schemes of points}. 
J. Algebraic Geom.  23  (2014),  no. 4, 669--692.

\bibitem{Zho}
J. Zhou, 
{\em  On equivariant elliptic genera of toric Calabi-Yau 3-folds}, 
arXiv:1510.08528. 

 
\end{thebibliography}
\end{document}